\newtheorem{theorem}{Theorem}
\newcommand{\xmark}{\ding{55}}%
\begin{document}

\begin{frontmatter}

\dochead{}
\title{
\begin{flushleft}
{\bf \Huge EBAKE-SE: A Novel ECC Based Authenticated Key Exchange between Industrial IoT Devices using Secure Element}
\end{flushleft}
}
\author[]{\bf \Large \leftline {Chintan Patel$^a$, Ali Kashif Bashir$^b$, Ahmad Ali AlZubi$^c$, Rutvij H Jhaveri$^*$$^d$}}

\address{\bf  \leftline {$^a$School of Artificial Intelligence, Information Technology and Cyber Security, Rashtriya Raksha University, Gujarat, India}

\bf  \leftline {$^b$Department of Computing and Mathematics, Manchester Metropolitan University, United Kingdom}
\bf  \leftline {$^c$Computer Science Department, Community College, King Saud University,  Saudi Arabia}
\bf  \leftline {$^d$Department of Computer Science and Engineering, School of Technology, Pandit Deendayal Energy University, Gujarat, India}
}
\fntext[]{Dr. Chintan Patel is an Assistant Professor in Rashtriya Raksha University which is an institute of national importance in india. (email:chintan.p592@gmail.com:).}

\fntext[]{Dr. Ali Kashif Bashir is associated with Manchester Metropolitan University, United Kingdom (email:dr.alikashif.b@ieee.org).}

\fntext[]{Dr. Ahmad Ali AlZubi is associated with King Saud University, Saudi Arabia (email:aalzubi@ksu.edu.sa).}

\fntext[]{Dr. RutviJ H. Jhaveri  (Corresponding author)  is an Assistant Professor in Pandit Deendayal Energy university. (email:rutvij.jhaveri@gmail.com:).}

\begin{abstract}
Industrial IoT (IIoT) aims to enhance services provided by various industries such as manufacturing and product processing. IIoT suffers from various challenges and security is one of the key challenge among those challenges. Authentication and access control are two notable challenges for any Industrial IoT (IIoT) based industrial deployment. Any IoT based Industry 4.0 enterprise \textcolor{black}{designs} networks between hundreds of tiny devices such as sensors, actuators, fog devices and gateways. Thus, \textcolor{black}{articulating a secure authentication protocol between sensing} devices or a sensing device and user devices is an essential step in IoT security. In this paper, first, we present cryptanalysis for the certificate-based scheme proposed for similar environment by Das et al. and prove that their scheme is vulnerable against various traditional attacks such as device anonymity, MITM, and DoS. We then put forward an inter-device authentication scheme using an \textcolor{black}{ECC (Elliptic Curve Cryptography)} that is highly secure and lightweight compared to other existing schemes for a similar environment. Furthermore, we set forth a formal security analysis using random oracle based ROR model and informal security analysis over the Doleve-Yao channel. In this paper, \textcolor{black}{we presents comparison of proposed} scheme with existing schemes based on communication cost, computation cost and security index to prove that the proposed EBAKE-SE is highly efficient, reliable, and trustworthy compared to other existing schemes for an inter-device authentication. At long last, we present an implementation for the proposed EBAKE-SE using MQTT protocol. 
\end{abstract}

\begin{keyword}
Internet of Things \sep Authentication \sep Elliptic Curve Cryptography \sep Secure Key Exchange \sep Message Queuing Telemetry Transport
\end{keyword}

\end{frontmatter}
\section{Introduction}
\noindent Industrial Internet of Things (IIoT) network is built up using a highly homogeneous, globally dynamic, deeply deployed, and comparatively resource-constrained devices to provide \textit{"Any type"} service at \textit{"Any location"} to \textit{"Anyone"} on \textit{"Any time"} \cite{wang2021sparse} \cite{extra4}. \textcolor{black}{The} Scale of IIoT data generation is directly proportional to the growing quantity of internet-connected devices. As per recent predictions (June 2019) by the global giant of telecommunications and market intelligence agency International Data Cooperation (IDC), there will be approx 42 billion deployed devices that will generate approx 80 ZettaByte data by 2025 \cite{jhaveri2021fault} \cite{extra3}. 

IIoT-based devices are a mixture of resource-constrained devices as well as resource-capable devices. \textcolor{black}{devices} deployed on the ground (ex. home for the smart home, factory for the smart industry, road for the smart transportation) are mostly resource-constrained devices such as sensors and actuators. Devices which collect data from these sensing devices (\textcolor{black}{A.c.a} gateway devices) are hybrid devices such as routers, raspberry-pi and node-MCU. The edge device or the fog device receives unstructured data from the sensing devices and performs pre-processing on that data to convert it into structured data. These edge devices are resource-capable and forward only necessary structured data over the cloud or to the user \cite{ali2019fractal}. Edge devices reduce unnecessary traffic over the cloud server through their intelligent pre-processing.

\begin{figure}
\includegraphics[width=\linewidth]{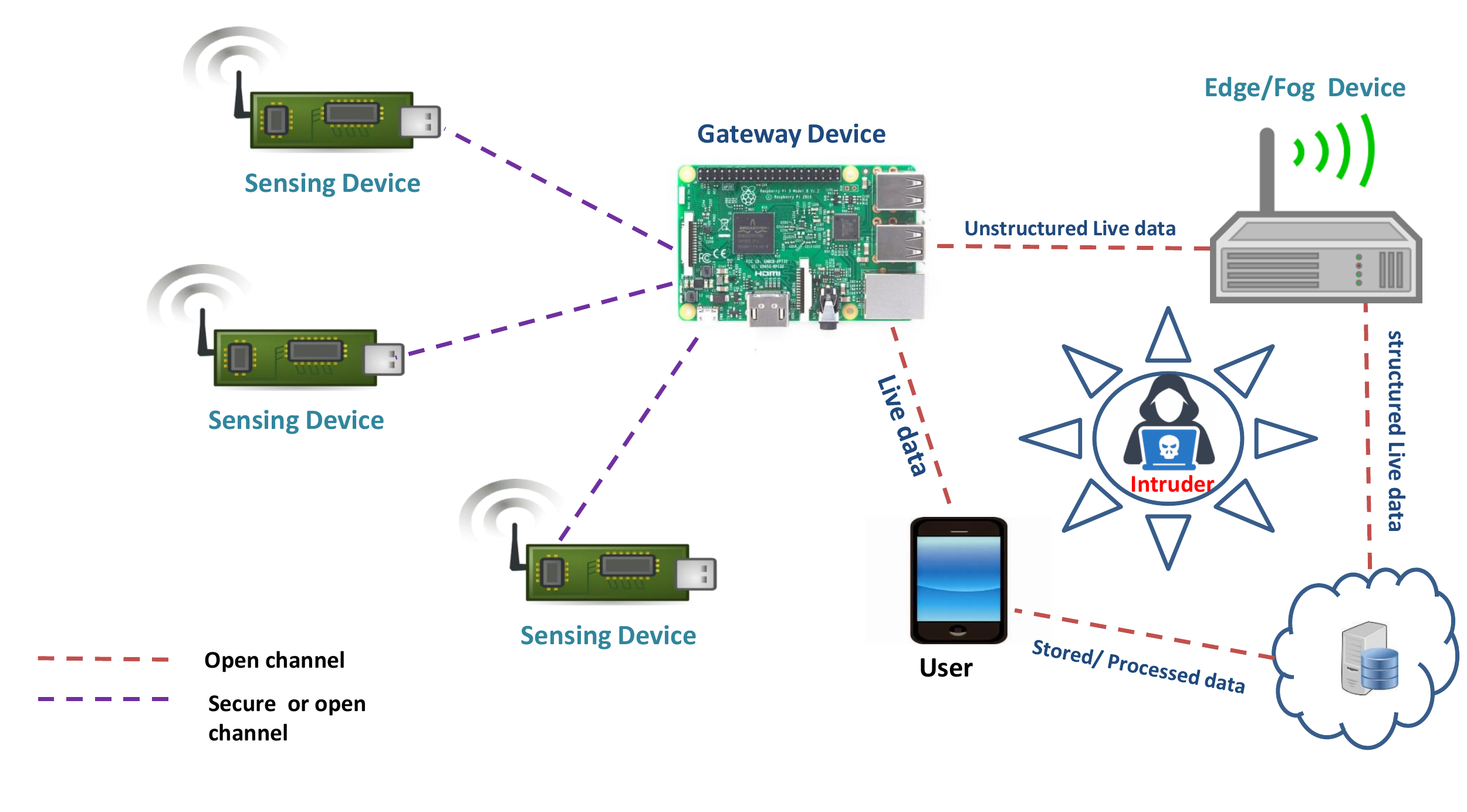}
\caption{Inter device data transfer in IoT.} \label{fig1}
\end{figure}

Fig. \ref{fig1} presents an overview \textcolor{black}{for} the generic IIoT \textit{"data chain"}. It highlights how raw material (i.e., unstructured data) is converted into the smart product (i.e., knowledge) used for quick and accurate decision-making. The IoT ecosystem consists of three significant aspects. (1) IoT devices (2) reliable, optimized, and secure communication between devices (3) data processing and knowledge generation. A recent survey by Sobin \cite{Sobin2020} highlights that scalability, lack of standard architectures and protocols, energy efficiency, and security and privacy are still open issues that limit the wide-range deployment of an IoT ecosystem. Other past surveys \cite{Atzori2010}, \cite{Fuqaha2015}, \cite{Sethi2017},\cite{extra1}, \cite{extra2} also highlighted that the IoT ecosystem suffers from the numerous privacy and security issues due to its resource-constrained devices, heterogeneous deployment, and dynamic nature. 

In the recent past, authors in \cite{Atzori2010}, \cite{Maple2017}, \cite{stoyanova2020survey}, \cite{irshad2020fuzzy}, \cite{MOHAMADNOOR2019} presented a brief study on numerous challenges and issues related to the IoT security and privacy. Author highlights an \textit{"authentication"} as a common threat to the IoT ecosystem. A secure and reliable authentication defines as a mutual trust-building between user-device and device-device through a resource-efficient key exchange protocol \cite{arul2018console}. In this paper, we provide cryptanalysis for the scheme proposed by Das et al. \cite{Das2019} for device-to-device authentication in a similar environment. We highlight \textcolor{black}{that the scheme proposed by Das et al.} is vulnerable to numerous attacks such as device impersonation, Man-In-The-Middle (MITM) attack, and Denial of Service (DoS) attack. We then put forward a considerably reliable and efficient inter-device Remote user authentication (RUA) scheme using a secure element (SE) and an Elliptic Curve Cryptography (ECC). Recently Qureshi et al. presented stream-based authentication for big data networks based on IoT sensing devices \cite{qureshi2021stream}        

\vspace{0.1in}
\noindent \textit{\textbf{Contribution:}} In this paper,
\begin{itemize}
    \item We \textcolor{black}{present} cryptanalysis for the authentication scheme proposed by Das et al. \cite{Das2019} for device-to-device authentication. We prove their scheme is not secured against device impersonation, MITM, and a DoS attack.
    \item We \textcolor{black}{present} a novel authentication scheme between two smart IIoT devices via Trusted Authority (TA) using ECC and SE.
    \item We present an informal security analysis for the proposed scheme using \textit{send} and \textit{receive} based Dolev-Yao channel. We then offer a formal security analysis for the proposed EBAKE-SE using a random oracle-based challenge-response game.
    \item Next, we \textcolor{black}{demonstrate} the implementation scenario and real-time results for the proposed EBAKE-SE using the physical IIoT devices.
    \item Furthermore, We put forward a comparative analysis of the proposed work with an existing work based on time and space requirements.
\end{itemize}

\noindent \textit{\textbf{Case study and Motivations:}}
IoT is a complex matrix of the numerous resource-constrained devices, as well as countless resourceful Advanced IoT (AdIoT) devices \textcolor{black}{\cite{Sethia2018}}. The internet-connected \textit{smart home} appliances such as a washing machine, refrigerator, AC, and CCTV system are considered an AdIoT devices. Wearable devices such as a smartwatch and a smart belt (for health monitoring) are lightweight, resource-constrained devices. Recent surveys show that 98\% of IoT devices communicate over an open channel, which is the most significant threat to a person's privacy and the confidentiality of data. The \textit{smart healthcare} system is equipped with numerous remotely controlled devices such as an intelligent ventilator, smart oxygen supplier, and a smart patient monitoring system. The prosperous attack on these devices can create chaos in the healthcare system. Thus, defending these IoT devices from traditional vulnerabilities and attacks is highly desirable. Any IoT system must ensure data confidentiality, data integrity, user privacy, secure device authentication, and secure device access control. Protecting the IoT devices from attacks such as \textit{DoS, MITM, spoofing, and impersonation} is challenging for security professionals. It is profoundly anticipated that the IoT system users must not use traditional passwords and update them frequently. They must upgrade their system periodically and configure the latest security patches for their devices to protect them from the ransomware attacks such as \textit{WannaCry} and \textit{NotPetya}.     

\vspace{0.1in}
\noindent \textit{\textbf{Road map of the Paper:}} Section \ref{related} provides a brief overview of the recent related work to the proposed EBAKE-SE and basic preliminaries used for articulating this manuscript. In Section \ref{review}, we overview the scheme proposed by Das et al., followed by a cryptanalysis of Das et al.'s scheme in Section \ref{cryptanalysis}. In Section \ref{proposed}, we proposed a reliable and efficient device-device authentication scheme between two smart IoT devices using a TA. Section \ref{informal} and Section \ref{formal} presents formal and informal formal security analysis for the proposed EBAKE-SE respectively. Section \ref{implelment} discusses implementation for the proposed EBAKE-SE. In Section \ref{comparative}, we compare the proposed scheme with other schemes based on communication and computation costs. Finally, we conclude this paper in section \ref{conclusion}.

\section{Related work and Preliminaries}
\label{related}
\noindent In this section, we discuss related work to the proposed work and primary preliminaries required for articulating this paper.
\subsection{{Related Work}}
\label{sub:relatedwork}
Authentication creates trust among communicating devices \cite{patel2020enhanced}. An ECC came forward as an efficient and reliable advancement for lightweight cryptography. The ECC provides equally strong security compared to the RSA and other traditional methods in much lighter ways (smaller key size and addition-based discrete logarithm). An ECC plays a key role in the optimized deployments of lightweight cryptography. ECC is public-key cryptography that works on the base assumption that it is infeasible to find a discrete logarithm for the random elliptic curve element based on a publicly known base point. Miller introduced the use of ECC in 1985 \cite{miller1985use} and populated by koblitz in 1987 \cite{koblitz1987elliptic}. Between 1987 and 2021, numerous authors proposed the ECC-based key exchange and authentication schemes.

In 2019, Dhillon et al. \cite{dhillon2019secure} proposed an ECC based authentication scheme for the \textit{SIP (Session Initiation Protocol)} that is used in \textit{VoIP (Voice-over-IP))} communication and provided a security analysis using AVISPA tool. Wearable devices play a key role in the numerous IoT-based applications such as smart healthcare and smart home. In 2019, Kumar et al. \cite{kumar2019secure} proposed the key exchange protocol between a user device (mobile device) and a wearable device using an ECC. In 2019, Lohachab et al. \cite{lohachab2019ecc} presented a scheme using an ECC for the MQTT communication and provided a security analysis using the AVISPA and an \textit{ACPT (Access Control Policy Testing)} tool. In 2019, Qi et al. \cite{qi2019secure} proposed an ECC-based authentication scheme for the secure session key establishment between a system user, \textit{Low Earth Orbit (LEO) satellite}, and the gateway device. 

In 2019, Garg et al. \cite{garg2019towards} also proposed an authentication scheme for the IIoT environment using lightweight operations, such as ECC and \textit{Physically Unclonable Functions (PUF)}. In 2019, Dammak et al. \cite{dammak2019token} proposed the token-based authentication scheme for the User-Gateway-device communication and claimed that their scheme is secured against a token impersonation attack and a stolen verifier attack. Recently Dang et al. \cite{dang2020pragmatic} proposed an authentication scheme using an ECC for the smart city environment. Authors in \cite{dang2020pragmatic} used Device-Device-Server \textit{(SD-SD-S)} network model for articulating their scheme and claimed that the proposed work achieves high energy efficiency. 

Designing a fully secured and highly resource-efficient security mechanism for \textcolor{black}{an} IoT environment is a challenging job. The IoT environment suffers from numerous vulnerabilities, such as inadequate physical security of the sensing devices, heterogeneity of the device manufacturers, proper standardizations, lower device synchronizations, and open ground for attackers. Hence, this paper proposes a novel authentication scheme that provides a robust and secured environment for session key generation between two IoT devices.      

\subsection{Preliminaries}
\label{sub:preliminaries}
\subsubsection{\textit{Elliptic Curve Cryptography (ECC)}}
\noindent An ECC provides a lightweight implementation for the public-key cryptography protocols such as a RSA with an equal level of the security. We can define an elliptic curve as a cubical curve of the form  \textit{$E_z$($\alpha,\beta$)} with the non repeatable roots defined over a finite field \textit{$\mathcal{F}_z$} where \textit{z} is a large prime number. We can represent an elliptic curve as per following Eq. \ref{eq1}.
\begin{equation}
    \label{eq1}
    E_z(\alpha,\beta) : Q^{2} = (P^{3} + \alpha*P + \beta) mod \gamma
\end{equation}
Here, \textit{P} and \textit{Q} are two curve points denoted by \textit{$P_t$(P,Q)}. The $\gamma$ represents a large prime number. Two constants \{$\alpha$, $\beta$\} are selected such that \{$\alpha$,$\beta$\} $\in$ \textit{$\mathcal{F}_z$} and their values must satisfy,
\begin{equation}
    4*\alpha^3 + 27*\beta^2 \neq 0 mod \gamma
\end{equation}
We can define the scalar point multiplication operations of an ECC over a point \textit{$P_t$} as $n*P_t$ = $P_t$ + $P_t$ +.......+ $P_t$ for \textit{n} times. The security of an ECC lies in finding the value of a large prime \textit{n} from the given $P_t$ and $n*P_t$. We can define the Elliptic Curve Discrete Logarithm Problem (ECDLP) as : from the given \textit{R = n*T}, it is difficult to find an integer \textit{n} in a polynomial time where \textit{n} $\in$  \textit{$\mathcal{F}_z$} and \textit{R} and \textit{T} are two points on elliptic curve \textit{$E_z$($\alpha,\beta$)}. We can define the Elliptic Curve Diffie-Hellman Problem (ECDHP) as: consider \{$\alpha$,$\beta$\} $\in$ \textit{$\mathcal{F}_z$} and \textit{P} is a point on the curve \textit{$E_z$($\alpha,\beta$)}. From the given $P$, $\alpha*P$ and $\beta*P$, it is difficult to compute a value $\alpha*\beta*P$ over \textit{$E_z$($\alpha,\beta$)} in a polynomial time.
\subsubsection{\textit{One-way Hash Function}}
\noindent A cryptographic hash function can be presented as $h:\{0,1\}^{*}\rightarrow{}\{0,1\}^{n}$ that takes string $p \in \{0,1\}^{*}$ as an input and outputs a fixed size binary string $Q\in\{0,1\}^{n}$. The cryptographic hash function must be collision resistant and preimage resistant for variable-size input and fixed-size output with enough randomness. 

\subsubsection{\textit{Network Model}}
\noindent A network model shown in Fig. \ref{fig2} \cite{Das2019} is followed for designing of authentication scheme. We consider the cloud trusted authority (TA) as a master controller in this network model. The IoT devices transmit data to each other over an open channel via the TA. The TA is a cloud MQTT server equipped with a broker. The IoT devices (such as a smart fridge or a gateway device) have \textcolor{black}{secure element that stores secret credentials in the tamper-resistant environment and the Wi-Fi module} (to connect with the internet). The secure element of a first device performs cryptographic operations in a tamper-proof environment and passes its outcome to the Wi-Fi module. This module publishes that data to the TA using the MQTT protocol, and the TA performs authentication operations and communicates with the second device using an MQTT. In this way, each of the three entities mutually authenticates each other, and after completion of the authentication phase, the IoT devices generate a one-time secure session key. Many authors follow another network model \cite{li2017robust} in that the gateway device is considered a trusted device due to the absence of a separate TA. Still, for the proposed scheme, we consider the presence of a separate TA (also works as a gateway) that setups security parameters for the IoT devices, including a gateway device, if required. 
\begin{figure}[H]
\includegraphics[width=\linewidth]{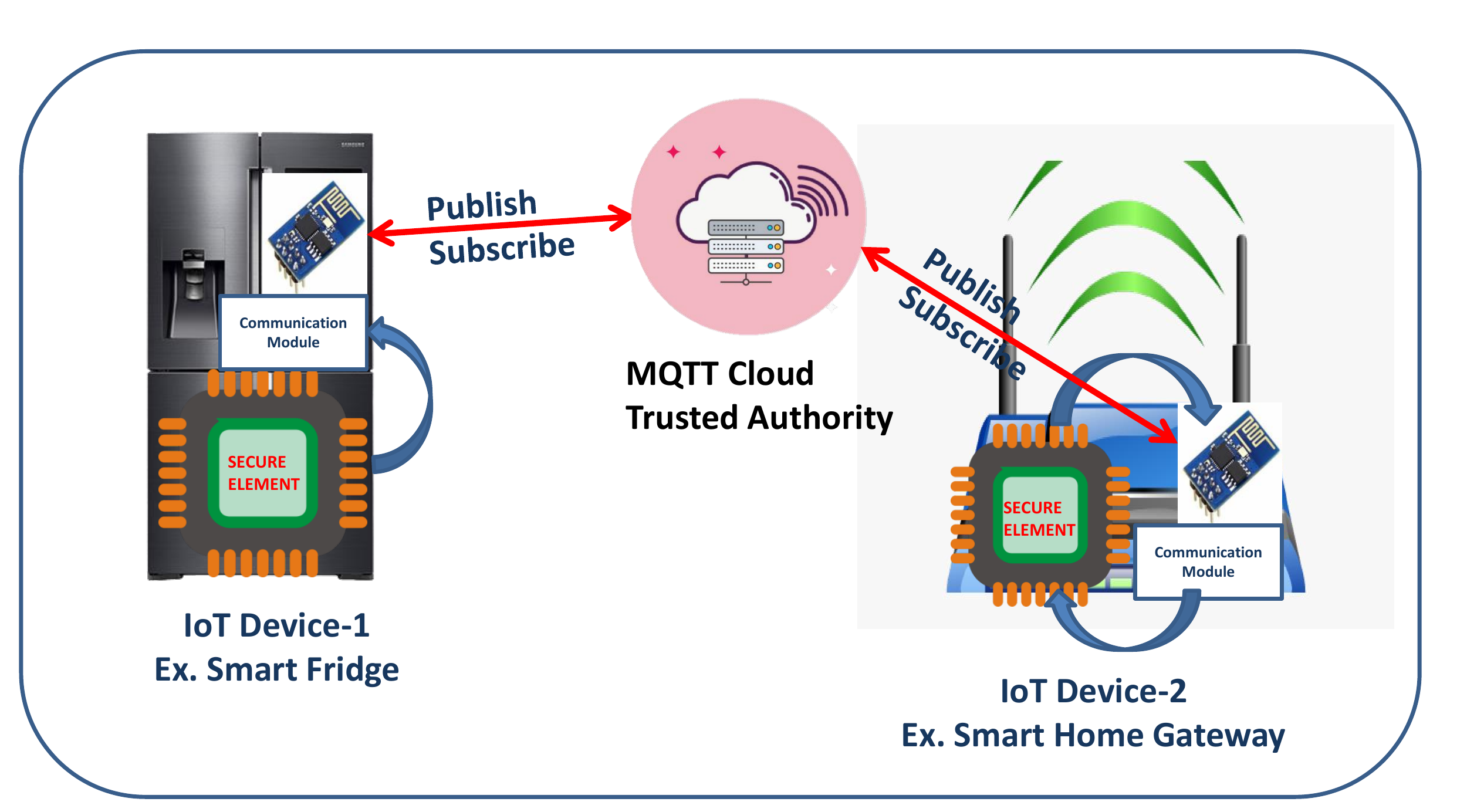}
\caption{Network model.} \label{fig2}
\end{figure}
\subsubsection{\textit{Threat Model}}
\noindent We adopted the Dolev-Yao channel based threat model for the proposed scheme. The attacker model or threat model for the proposed scheme is as follow:
\begin{itemize}
    \item Challenger $\mathcal{C}$ can read, access, modify, and store the communication over the open channel.
    \item Smart IoT devices, including gateway devices (in the presence of separate certificate authority or trusted authority), are not trusted devices. 
    \item Challenger $\mathcal{C}$ can capture the smart IoT device and extract the stored data over it.
    \item The TA is a trusted entity, and the polynomial-time challenger $\mathcal{C}$ can not compromise it. 
    \item Challenger $\mathcal{C}$ might receive the secrets of a TA in case of the system failure.
\end{itemize}
\subsubsection{Notations and Symbols}
\noindent Table \ref{Table:1} gives symbols and notations used for cryptanalysis and designing of the EBAKE-SE. 
\begin{table}[H]
    \caption{Notations and symbols}
    \centering
    \begin{tabular}{p{2cm}|p{5cm}} \hline
    \textbf{Symbols} & \textbf{Descriptions}  \\ \hline
    TA  & Trusted Authority \\
    $D_x$, $D_y$ & Xth and Yth Smart IoT devices \\
    $ID_x$, $ID_y$ & Identity of xth and yth smart IoT devices \\
    $TS_x$ / $T$ & Time-stamp \\
    $Pr_x$ & Private key of device $D_x$ \\
    $Pub_x$ & Public key of device $D_x$ \\
    $SK_{xy}$ & Generated session key \\
    Topic & MQTT topic \\
    $r_d$ & Random number \\
    $K_{dta}$ & 160 bit shared key \\
    $N_d$ & Random nonce \\
    $E_p(a,b)$ & Elliptic curve selected by TA \\
    \textit{Enc/Dec} & Encryption/Decryption \\
    $\bigoplus$ & Exclusive OR operation \\
    $P$ & Base point of the elliptic curve \\ 
    $D_{GWN}$ & Gateway device \\ \hline
    \end{tabular}
    \label{Table:1}
\end{table}

\section{Review of Das et al.'s scheme}
\label{review}
\noindent The scheme proposed by Das et al. \cite{Das2019} consists of four phases. (1) \textit{System setup phase} by TA (2) \textit{Device registration phase} by smart IoT device with TA (3) \textit{Device authentication phase} between two IoT smart device (4) \textit{Dynamic device addition phase} by TA. 

\noindent \textbf{\textit{(1) System setup phase:}} 

In this phase, the trusted authority (TA) decides finite field $\mathcal{F}_z$ and selects elliptic curve $E_z(a,b)$ (i.e. FIPS 186) over it. The TA also chooses base point $P$ of order x such that $x*P$ = $\mathcal{O}$ (infinity point). The TA generate pair of own private key and public key as a $(Pr_{TA},Pub_{TA})$ where $Pr_{TA}$ is randomly generated number and $Pub_{TA}$ = $Pr_{TA}*P$. Furthermore, the TA chooses one-way hash function h(.) (i.e. SHA1, MD5) for further processing and consistency between all the devices. Finally the TA publishes {$E_p(a,b)$, $P$, $p$, $Pub_{TA}$, h(.)} as a public parameters and stores $Pr_{TA}$, as a private parameter. Note that the TA is considered as a trusted entity \textcolor{black}{\cite{Sethia2018}}. 

\noindent \textbf{\textit{(2) Device registration phase:}} 

In this phase, the TA generates pair of \{$ID_x$, $Pr_x$, $A_x$, $c_x$, $Pub_x$, $E_p(a,b)$, $P$, $p$, $Pub_{TA}$, h(.)\} and then loads it over memory of the device $D_x$. Here $Pub_x$ = $Pr_x*P$, $A_x$ = $(Pr_x + l_x) * P$ where $l_x$ is a distinct random number for each device $D_x$ and  $c_x$ =$Pr_{TA}$ + $(Pr_x + l_x)$h($ID_x||A_x$). The pair of \{$ID_x$, $Pr_x$\} is generated by the TA for each device $D_x$.  

\noindent \textbf{\textit{(3) Device authentication phase:}}

In this phase, two smart IoT devices $D_x$ and $D_y$ performs authentication and set up a session key $SK_{xy}$ among each other. The summary of this phase is as follow:
\begin{enumerate}
    \item \textit{\textbf{$D_x \rightarrow D_y$ :}} The $D_X$ produces random $r_x$ and timestamp $TS_x$, computes $R_x$ = $r_x*P$, $z_x$ = $c_x$ + $h(A_x||c_x||R_x||Pub_x||TS_x)(r_x+Pr_x)$. The $D_x$ sends \textit{message 1} = \{$TS_x, ID_x, c_x, z_x, A_x, Pub_x, R_x$\} to other IoT device $D_y$.
    \item \textit{\textbf{$D_y \rightarrow D_x$ :}} The $D_y$ verifies timestamp, Verifies $U_y$ $\stackrel{?}{=}$ $c_x*P$ after computing $U_y$ = $Pub_{TA}+h(ID_x||A_x)A_x$, verifies $W_y$  $\stackrel{?}{=}$ $z_x*P$ after computing $W_y$ = $c_x*P$+$h(A_x||c_x||R_x||TS_x||Pub_x)(R_x+Pub_x)$. Next to these verification, the $D_y$ produces $TS_y$ and $r_y$ and computes $R_y$ = $r_y*P$, $z_y$ = $c_y$ + $h(A_y||c_y||R_y||Pub_y||TS_y)(r_y+Pr_y)$, $K_{xy}$ = $pr_y*Pub_x$, $B_{xy}$ = $r_y*R_x$, $SK_{xy}$ = $h(B_{xy}||K_{xy}||TS_y||TS_x||ID_x||ID_y$), $SKV_{xy}$ = $h(SK_{xy}||TS_y)$, sends \textit{message 2} = \{$ID_y, TS_y, A_y, c_y, z_y, SKV_{xy}, Pub_y, R_y$\} to device $D_x$.
    \item \textit{\textbf{$D_x \rightarrow D_y$ :}} The device $D_x$ verifies timestamp and verifies $U_x$ $\stackrel{?}{=}$ $c_y*P$ by computing $U_x$ = $Pub_{TA}+h(ID_y||A_y)A_y$. The device $D_x$ verifies $W_x$  $\stackrel{?}{=}$ $z_y*P$ by computing $W_x$ = $c_y*P$+$h(A_y||c_y||R_y||TS_y||Pub_y)(R_y+Pub_y)$, computes $K_{yx}'$ = $pr_x*Pub_y$, $B_{yx}'$ = $r_x*R_y$, $SK_{xy}'$ = $h(B_{yx}||K_{yx}||TS_x||TS_y||ID_y||ID_x$), verifies $SKV_{xy}$ $\stackrel{?}{=}$ $h(SK_{xy}'||TS_y)$. After this verification, the device $D_x$ produces timestamp $TS_x'$, computes $SKV_{yx}'$ = $h(SK_{yx}'||TS_x')$, generates \textit{message 3} = \{$SKV_{yx}', TS_x'$\} and sends it to the device $D_y$.
    \item \textit{\textbf{$D_y \rightarrow D_x$ :}} The device $D_y$ verifies timestamp and $SKV_{yx}*$ $\stackrel{?}{=}$ $SKV_{yx}'$ after computing $SKV_{yx}*$ = $h(SK_{yx}'||TS_x')$. After this verification both devices $D_x$ and $D_y$ agrees on the  session key $SK_{yx}'$ = $SK_{xy}$. 
\end{enumerate}

\noindent \textbf{\textit{(4) Dynamic device addition phase:}}
In this phase, the TA deploys new device or replaces device $D_x$ by $D_x'$. The TA selects $ID_x'$ and private key $Pr_x'$, computes public key $Pub_x'$ = $Pr_x'*P$, generates random number $l_x'$. The TA calculates $A_x'$ = $(Pr_x' + l_x') * P$, $c_x'$ =$Pr_{TA}$ + $(Pr_x' + l_x')$h($ID_x'||A_x'$) and stores \{$ID_x'$, $Pr_x'$, $A_x'$, $c_x'$, $Pub_x'$, $E_p(a,b)$, $P$, $p$, $Pub_{TA}$, h(.)\} in memory of the device $D_x'$.

\section{Cryptanalysis of Das et al.'s scheme}
\label{cryptanalysis}
\noindent In this section, we provide cryptanalysis for Das et al.'s and show that their scheme is vulnerable against attacks such as device impersonation, MITM, and DoS attack.
\subsection{Vulnerable Against Identity Theft attack/ Device Tracking Attack}
\noindent In the device authentication phase between device $D_x$ and $D_y$, 
\begin{itemize}
    \item Device $D_x$ sends \textit{message 1} = \{$TS_x, ID_x, A_x, c_x, z_x, Pub_x, R_x$\} to $D_y$ over an open channel.
    \item The \textit{message 1} contains identity $ID_x$ of the device $D_x$ in the plain text. The device $D_x$ does not protect it's identity inside \textit{message 1} though either hash or encryption. Thus, any challenger $\mathcal{C}$ can capture the $ID_x$ and use it for tracing the device $D_x$.
    \item Device $D_y$ sends \textit{message 2} = \{$ID_y, TS_y, A_y, c_y, z_y, SKV_{xy}, Pub_y, R_y$\} to $D_x$ over an open channel.
    \item The \textit{message 2} contains identity $ID_y$ of the device $D_y$ in the plain text. The device $D_y$ does not protect its identity inside \textit{message 2} though either hash or encryption. Thus, any challenger $\mathcal{C}$ can capture the $ID_y$ and use it for tracing the device $D_y$.
\end{itemize}
\subsection{Vulnerable Against Device Impersonation Attack / Device Capturing Attack / DoS}
\noindent In any IoT deployment, protecting the device from a \textit{physical device capturing} is a significant challenge. Authors in \cite{Das2019} do not provide any challenger limitations about the physical capturing of the smart devices. In the attacker model, Das et al. highlighted that the IoT device could be captured by the challenger $\mathcal{C}$. The challenger $\mathcal{C}$ can apply the \textit{power analysis attack} \cite{kocher1999} on any IoT device and can extract the stored information. Now let us examine Das et al.'s scheme against \textit{device impersonation attack}.
\begin{itemize}
    \item In the device registration phase, the TA loads \{$ID_x$, $Pr_x$, $A_x$, $c_x$, $Pub_x$, $E_p(a,b)$, $P$, $p$ ,$Pub_{TA}$, h(.)\} on device $D_x$. Now let us assume that the challenger $\mathcal{C}$ physically captures device $D_x$ and applies a power analysis attack on it. After performing successful power analysis attack, the challenger $\mathcal{C}$ already has \{$ID_x$, $Pr_x$, $A_x$, $c_x$, $Pub_x$, $E_p(a,b)$, $P$, $p$ ,$Pub_{TA}$, h(.)\}.
    \item Now, let us examine the first message generated by the device $D_x$. The device $D_x$ sends \textit{message 1} = \{$TS_x, ID_x, A_x, c_x, z_x, Pub_x, R_x$\} over an open channel. Now, the challenger $\mathcal{C}$ tries to generate a valid \textit{message 1*}.
    \item The challenger $\mathcal{C}$ already has \{$ID_x, A_x, c_x, z_x, Pub_x$\}. Now the challenger $\mathcal{C}$ generates random number $r_c$ from the public parameters of ECC and computes $R_c$ = $r_c$*P. Now, the challenger $\mathcal{C}$ also generates timestamp $TS_c$ and sends message \textit{message 1*} = \{$TS_c, ID_x, A_x, c_x, z_x, Pub_x, R_c$\} to device $D_y$.
    \item Now the device $D_y$ verifies timestamps, and computes $U_y$ = $Pub_{TA}+h(ID_x||A_x)A_x$, The device $D_y$ successfully verifies $W_y$  $\stackrel{?}{=}$ $z_x*P$ after computing $W_y$ = $c_x*P$+$h(A_x||c_x||R_c||TS_c||Pub_x)(R_c+Pub_x)$. Thus the challenger can also generate \textit{message 1*} that leads to valid \textit{device impersonation}.  
    \item In the proposed scheme of Das et al., the device $D_x$ or the device $D_y$ does not block fake devices even after numerous failed attempts from the sender. Thus, this can easily drain the receiving device's battery and may lead to power failure. Therefore, we can say that any malicious attacker can send fake requests and leads the system toward \textit{DoS}.
\end{itemize}
\subsection{Vulnerable Against MITM Attack / Fake Session Key Setup}
\noindent The scheme of Das et al. is also vulnerable against MITM attack. In the scheme proposed by Das et al.,
\begin{itemize}
    \item Let us assume that there is a malicious intruder $\mathcal{C}$ eavesdrops public message  \textit{message 1} = \{$TS_x, ID_x, A_x, c_x, z_x, Pub_x, R_x$\} and  \textit{message 2} = \{$ID_y, TS_y, A_y, c_y, z_y, SKV_{xy}, Pub_y, R_y$\}. Now let us assume that $\mathcal{C}$ computes $B_{cj}$ = $r_c$*$R_x$ and $K_{cj}$ = $x_c$*$Q_x$ and generates $SK_{xc}**$ = $h(B_{cj}||K_{cj}||TS_y||TS_x||ID_x||ID_y$), $SKV_{xc}**$ = $h(SK_{xc}||TS_y)$ and forward to device $D_x$. 
    \item We must note here that the challenger $\mathcal{C}$ only replaces $SKV_{xy}$ by $SKV_{xc}**$ and sends the remaining \textit{message 2} as it is. Thus the device $D_x$ can not identify that the received message is from the challenger $\mathcal{C}$, not from the valid device $D_y$. The device $D_x$ uses $Pub_y$ and $R_y$ from the \textit{message 2} (not from the previous knowledge) for the computation of the $B_{ij}'$ and $K_{ij}'$. Thus, unknowingly, the device $D_x$ establishes the session key with the challenger $\mathcal{C}$.   
\end{itemize}
\section{Proposed scheme : EBAKE-SE}
\label{proposed}
\noindent The \textit{secure element (SE)} is a tamper-resistant microprocessor chip that stores secret data for the tiny devices and securely runs their applications. The \textit{secure element} is embedded with the IoT devices in such a way that the logical tempering on it becomes an impossible task and the physical tempering of a \textit{secure element} destroys the functioning of the device \cite{secureelement}. In the proposed setup, we consider that both the IoT devices are embedded with the \textit{secure element} on it. Fig. \ref{fig2} shows the communication model for the proposed EBAKE-SE. In EBAKE-SE, we consider the MQTT Cloud server as a resource-capable, trusted authority (TA) that runs the MQTT broker module. We highlight more detail about the MQTT protocol in section \ref{implelment}. In this section, we provide improvements for the scheme proposed by Das et al. \cite{Das2019}. In the proposed EBAKE-SE, there are two major phases. In the first phase, the TA initializes the system, generates necessary parameters, and stores those parameters in the SE of the smart IoT devices. In the second phase, two IoT devices perform mutual authentication via TA and generate a one-time session key ($SK_{xy}$) for further secure communications. In this phase, the TA also allocates a temporary (for a session) MQTT topic on which these devices perform encrypted communication. In the proposed EBAKE-SE, each smart IoT device has two connected modules. The first module is the SE module, which runs cryptographic operations. The second module is a wifi module (we used the esp8266 module for implementation), which connects the device with the internet for communication with TA using the MQTT protocol. The proposed EBAKE-SE overcomes the limitations of the analyzed scheme and introduces some novel features compared to other existing schemes proposed for a similar environment. 
%begin{figure}
%    \centering
%    \includegraphics[width=\textwidth]{fig3.pdf}
%    \caption{Proposed Communication Module}
%    \label{fig3}
%\end{figure}
\subsection{System Initialization Phase}
\noindent In this phase, the TA generates credentials for self and smart IoT devices and loads those credentials over the SE of the IoT device. The TA performs initialize phase in a secure environment as follows: The TA selects a base point $P$ for the curve $E_p(a,b)$. The TA generates unique identity for each $x^{th}$ device as $ID_d^x$, generates random number for each $x^{th}$ device as a $r_d^x$, and generates shared secret $K_{dta}$ between the device $D_x$, other IoT devices and itself. TA updates $K_{dta}$ periodically. TA computes device parameter $DP_1^x$ : \textit{hash \big \langle $ID_d^x$, $r_d^x$, $K_{dta}$ \big \rangle}. TA computes public parameter $Q_d^x$ : $r_d^x$*P for each $x^{th}$ smart IoT device. TA loads pair \textit{\big \langle $ID_d^x$, $r_d^x$, $K_{dta}$, $DP_1^x$ \big \rangle} on SE of the device $D_x$. TA also loads pair \textit{\big \langle $ID_d^x$, $DP_1^x$, $K_{dta}$, $Q_d^x$ \big \rangle} on its own secret memory.  
\subsection{Mutual Authentication Phase}
\noindent In the IoT setup, each party must trust the other. In this phase, initially we perform the mutual authentication between devices \textit{\big \langle $D_x$, $TA$ \big \rangle}, \textit{\big \langle $D_y$, $TA$ \big \rangle}, and \textit{\big \langle $D_x$, $D_y$ \big \rangle}. This is followed by a secure session key generation between devices $D_x$ and $D_y$ as a $SK_{xy}$ and topic allocation by $TA$. The system performs mutual authentication as follows: 

The device $D_x$ generates temporary id $ID_T^x$: \textit{\big \langle $W^x$, $Y^x$, $Z^x$ \big \rangle} as follow: 

\textit{\textbf{Step-1 :}} The device $D_x$ generates random nonce $N_d^x$ and computes $W^x$: \textit{Enc \big \langle  ($K_{dta}, (ID_d^x, r_d^x) $\big \rangle}, $Y^x$: \textit{xor \big \langle  ($DP_1^x$, $Q_d^y$) \big \rangle}, $Z^x$: \textit{Enc \big \langle  ($Q_d^y, (Q_d^x, ID_x N_d^x, T_1) $\big \rangle}, $P_d^x$: \textit{hash \big \langle $DP_1^x$, $N_d^x$, $T_1$ \big \rangle}. The device $D_x$ publishes \big \langle $ID_T^x$, $P_d^x$, $T_1$ \big \rangle to TA. 

\textit{\textbf{Step-2 :}} The TA receives $ID_T^x$ and performs as follow: TA first verifies timestamp and then verifies identity of the sending device as : The TA verifies $\Delta T \stackrel{?}{\leq} T_1^{*}$ - $T_1$, retrieves pair \textit{\big \langle ($ID_d^x*, r_d^x*) $\big \rangle} by: \textit{Dec \big \langle  $K_{dta}$, ($W^x$) \big \rangle}. The TA computes $DP_1^x$*: \textit{hash \big \langle $ID_d^x*$, $r_d^x*$, $K_{dta}*$ \big \rangle}, computes $P_d^x$*:\textit{hash \big \langle $DP_1^x*$,$T_1$ \big \rangle} and verifies $P_d^x$* $\stackrel{?}{\leq}$ $P_d^x$. After three unsuccessful verification from the same device, the TA blocks the device for a day. Now, the TA retrieves  $Q_d^y$* : \textit{xor \big \langle  ($DP_1^x$*, $Y^x$)}. The TA identifies $D_y$, computes $P_d^y$: \textit{hash \big \langle $DP_1^y$, $T_2$}, and publishes \big \langle $Z^x$, $P_d^y$, $T_2$ \big \rangle to $D_y$.  

\textit{\textbf{Step-3 :}} The $D_y$ receives pair \big \langle $Z^x$, $P_d^y$, $T_2$ \big \rangle. The $D_y$ verifies $\Delta T \stackrel{?}{\leq} T_2^{*}$ - $T_2$  and retrieves \big \langle  $Q_d^x$, $ID_d^x$, $N_d^x$, $T_1$  \big \rangle by \textit{Dec \big \langle ($r_d^y, (Z^x) $\big \rangle}. The device $D_y$ verifies $P_d^y$ $\stackrel{?}{\leq}$ $P_d^y*$: \textit{hash \big \langle $DP_1^y$, $T_2$}. By this verification, the device $D_y$ authenticates TA. After three unsuccessful authentications, the device $D_y$ blocks $TA$ for a day by considering it a DoS attack from the malicious insider. Now the device $D_y$ generates a nonce $N_d^y$, computes $Z^y$: \textit{Enc \big \langle  ($Q_d^y, (ID_y, N_d^y, T_2) $\big \rangle}, computes $P_d^{TA}$: \textit{hash \big \langle $DP_1^x$, $ID_d^x$, $ID_d^y$, $T_3$, $ID_d^y$} and publishes pair \big \langle $Z^y$, $P_d^{TA}$, $T_3$ \big \rangle to $TA$. The device $D_y$ computes one-time secure session key for the device $D_x$ as $SK_{xy}$ : \textit{hash \big \langle $ ID_y, N_d^y, T_1, ID_x, N_d^x, T_2, K_{dta}) $ \big \rangle}. 

\textit{\textbf{Step-4 :}} The TA receives data from the device $D_y$ and verifies $\Delta T \stackrel{?}{\leq} T_3^{*}$ - $T_3$. Now the TA also verifies $P_d^TA$ $\stackrel{?}{\leq}$ $P_d^TA$: \textit{hash \big \langle $DP_1^x$, $ID_d^x$, $ID_d^y$, $T_3$, $ID_d^y$}. After three unsuccessful verification from the same device, the TA blocks the device for a day. Now the TA computes $P_d^{xx}$: \textit{hash \big \langle $DP_1^x$, $Z^y$, $T_4$ \big \rangle}, and publishes pair \big \langle $Z^y$,$T_4$ \big \rangle along with MQTT topic \textit{T} to device $D_x$. The TA shares the same MQTT topic (\textit{T}) with the device $D_y$. 

\textit{\textbf{Step-5 :}} The device $D_x$ verifies $\Delta T \stackrel{?}{\leq} T_4^{*}$ - $T_4$, verifies $P_d^{xx}$ $\stackrel{?}{\leq}$ $P_d^{xx}*$: \textit{hash \big \langle $DP_1^x$, $Z^y$, $T_4$ \big \rangle}. By verifying the device, $D_x$ authenticates both TA and the device $D_y$. After three unsuccessful authentications, the device $D_x$ blocks stops communication with TA for a day by considering it a DDoS attack from the malicious insider. The device $D_x$ retrieves pair \big \langle  ($Q_d^y, (ID_y, N_d^y, T_2) $ \big \rangle by \textit{Dec \big \langle  ($r_d^x, (Z^y) $ \big \rangle},  and computes one-time secure session key for the device $D_y$ as $SK_{xy}$ : \textit{hash \big \langle $ ID_x, N_d^x, T_1, ID_y, N_d^y, T_2, K_{dta} $ \big \rangle}. The device $D_x$ and $D_y$ starts $SK_{xy}$ encrypted  communication over a given topic \textit{T}.  

\vspace{0.1in}
\noindent Thus, after completion of this phase, both the devices have a pair of \textit{\big \langle $SK_{xy}$, \textit{T} \big \rangle}. We like to observe that even though we perform mutual authentication via $TA$, the $TA$ can not compute the final session key $SK_{xy}$ due to a lack of awareness about the random numbers ($r_d^x$, $r_dy$) and the random nonces ($N_d^y$, $N_d^y$). The verification parameters ($P_d^{x}$, $P_d^{y}$, $P_d^{XX}$, $P_d^{TA}$) provide strength to the proposed work. The use of timestamps prevents an intruder from performing a replay-type attack. In the proposed EBAKE-SE, to protect a device from the DoS and DDoS type attacks, we block malicious devices for a day if the receiver could not authenticate it after three verification. The novelty in the proposed scheme lies with the use of the tamper-resistant SE on each IoT device.

\section{Informal security analysis}
\label{informal}
\noindent In this section, we show that the proposed EBAKE-SE achieves desired security goals and resists all well-known attacks with excellent cryptography functions. Table \ref{securitygoals} highlights security features based comparison between the proposed scheme and other existing schemes. 
\subsection{Achieves Security Against Traditional and Non-traditional Attacks}
This subsection provides proof of the "informal security" for the proposed EBAKE-SE. 
\begin{enumerate}[F1.]
\item \textbf\textit{{EBAKE-SE is secure against a reply attack:}}  We involve random numbers and timestamps in all the exchanged messages during the mutual authentication phase of the proposed EBAKE-SE. Use of the random numbers \{$N_d^x$, $N_d^y$\}, and timestamps \{$T_1$, $T_2$, $T_3$, $T_4$\} guarantees freshness of the communicated messages. As a result, the proposed EBAKE-SE is free from replay attacks.

\item \textcolor{black}{EBAKE-SE is secure against a MITM Attack:} Suppose a challenger $\mathcal{C}$ expropriate the valid authentication messages and tries to modify these messages to another valid authentication message. It is "computationally infeasible challenge" for a challenger $\mathcal{C}$ to generate valid authentication message \{$ID_T^x$, $P_d^x$, $T_1$\} due to unawareness about the shared secret $K_{dta}$ stored in SE and original random nonce $N_d^x$. In like manner, $\mathcal{C}$ can not also generate other valid authentication messages. This obliques that the proposed EBAKE-SE achieves protection from the "Man-In-The-Middle" attack.   

\item EBAKE-SE is secure against an impersonation attack: In impersonation attack, the challenger $\mathcal{C}$ tries to create a valid authentication \textit{message} \{$ID_T^x$, $P_d^x$, $T_1$\} to pretend as a valid device $D_x$. The challenger $\mathcal{C}$ must require secret parameters such as \{$K_{dta}, ID_d^x, r_d^x$\} to generate  \textit{message}. These secret parameters are stored in SE, and it is impossible for the challenger $\mathcal{C}$ to obtain these values. Thus, eavesdropping of \textit{message} will not allow a challenger $\mathcal{C}$ to generate similar \textit{message*} to impersonate as a device $D_x$. Similarly, $\mathcal{C}$ can not also pretend as a device $D_y$. Hence, the proposed EBAKE-SE is immune enough against an impersonation attack.

\item \textcolor{black}{EBAKE-SE preserves anonymity and traceability: }Suppose a challenger $\mathcal{C}$ captures messages \{$ID_T^x$, $P_d^x$, $T_1$\}, \{$Z^x$, $P_d^y$,$T_2$\}, \{$Z^y$, $P_d^{TA}$, $T_3$\}, \{$Z^y$,$T_4$\} and tries to trace the devices $D_x$ and $D_y$. To trace the devices, challenger $\mathcal{C}$ must require either static messages or public identity. In the proposed EBAKE-SE, each message is an output of the random values, and none of the public messages contains the identity of either device in the plain text. Therefore the proposed EBAKE-SE achieves anonymity and traceability.   
\begin{table}
    \caption{Security Features and Goals}
    \scalebox{.90}{\begin{tabular}{p{1.2cm}p{0.3cm}p{0.3cm}p{0.3cm}p{0.3cm}p{0.3cm}p{0.3cm}p{0.3cm}p{0.3cm}p{0.3cm}} \hline
    \textbf{Scheme} & \textbf{$F_1$} & \textbf{$F_2$} & \textbf{$F_3$} & \textbf{$F_4$} & \textbf{$F_5$} & \textbf{$F_6$} & \textbf{$F_7$} & \textbf{$F_8$} & \textbf{$F_9$} \\ \hline
    \textcolor{black}{Proposed} & \checkmark & \checkmark & \checkmark & \checkmark & \checkmark & \checkmark & \checkmark  & \checkmark & \checkmark \\
    \textcolor{black}{\cite{Das2019}} & \checkmark & \xmark & \xmark & \checkmark & \checkmark & \xmark & \checkmark  & \checkmark & \checkmark \\ 
    \textcolor{black}{\cite{challa2017secure}} & \checkmark & \checkmark & \checkmark & \checkmark & \checkmark & \checkmark & \xmark  & \xmark  & \xmark  \\
    \textcolor{black}{\cite{li2017robust}} & \checkmark & \xmark & \checkmark & \checkmark & \checkmark & \xmark & \checkmark  & \checkmark & \checkmark \\
    \textcolor{black}{\cite{wu2018secure}} & \checkmark & \checkmark & \checkmark & \checkmark & \xmark & \checkmark & \checkmark  & \xmark & \xmark \\ 
    \textcolor{black}{\cite{porambage2015group}} & \checkmark & \xmark & \checkmark & \checkmark & \xmark & \checkmark & \checkmark  & \checkmark & \checkmark \\ 
     \textcolor{black}{\cite{li2019}} & \xmark & \xmark & \checkmark & \checkmark & \checkmark & \checkmark & \xmark  & \checkmark & \xmark \\ \hline
    \end{tabular}}
    \label{securitygoals}
\end{table}
\item EBAKE-SE is secure against a secret leakage attack: 
In the proposed scheme, we use long term secrets \{$K_{dta}$,  $r_d^x$\} and session-specific temporary nonces \{$N_d^x$, $N_d^y$\}. The session key is computed as a $SK_{xy}$ : \textit{hash \{$ID_x, N_d^x, T_1, ID_x, N_d^x, T_2, K_{dta}\} $}. Now let us assume challenger $\mathcal{C}$ reveals pair \{$K_{dta}$,  $r_d^x$\} then also he/she can not compute the session key because of non availability of \{$ID_x, N_d^x, ID_x, N_d^x\} $ \big \rangle. Similarly, exposure of any information does not allow a challenger $\mathcal{C}$ to validate a key. Hence, we derive that EBAKE-SE is secured against a secret leakage attack.

\item EBAKE-SE is secure against insider attack:
Suppose a malicious administrator on TA tries to compute the session key using available data. The malicious administrator retrieves stored parameters \{$ID_d^x$, $DP_1^x$, $K_{dta}$, $Q_d^x$ \} as well as receives public messages \{$ID_T^x$, $P_d^x$, $T_1$\}, \{$Z^x$, $P_d^y$,$T_2$\}, \{$Z^y$, $P_d^{TA}$, $T_3$\}, \{$Z^y$,$T_4$\}. The malicious administrator does not get random nonces \{$N_d^x$, $N_d^y$\} necessary for session key computations. In the proposed EBAKE-SE, TA does not store \{$r_d^x$, $r_d^y$\}. Hence, the proposed EBAKE-SE is free from malicious insider attacks.

\item EBAKE-SE achieves session key agreement: In the proposed EBAKE-SE, mutual authentication between the smart devices and TA is achieved by following verifications: $P_d^x$* $\stackrel{?}{\leq}$ $P_d^x$ (By TA for $D_x$), $P_d^y$ $\stackrel{?}{\leq}$ $P_d^y*$ (By $D_y$ for TA), $P_d^TA$ $\stackrel{?}{\leq}$ $P_d^TA$ (By TA for $D_y$) and $P_d^{xx}$ $\stackrel{?}{\leq}$ $P_d^{xx}*$ (By $D_x$ for TA and $D_y$). The session key computation involves insider parameters from these validations $SK_{xy}$ : \textit{hash \big \langle $ ID_y, N_d^y, T_1, ID_x, N_d^x, T_2, K_{dta}) $ \big \rangle}. Therefore, we derive that the proposed EBAKE-SE achieves session key agreement. 
\item EBAKE-SE is secure against perfect forward secrecy : Suppose a challenger $\mathcal{C}$ obtains shared secret credentials $K_{dta}$, furthermore, the challenger intercepts the messages \{$ID_T^x$, $P_d^x$, $T_1$\}, \{$Z^x$, $P_d^y$,$T_2$\}, \{$Z^y$, $P_d^{TA}$, $T_3$\}, \{$Z^y$,$T_4$\} communicated between the smart devices via TA. To obtain the previous session key, the challenger $\mathcal{C}$ must compute $SK_{xy}'$ = \textit{hash \big \langle $ ID_y', N_d^y, T_1', ID_x', N_d^x, T_2', K_{dta}' $ \big \rangle}. Even though, if adversary also obtains identity of devices somehow, he/she must extract past random nonces \{$N_d^x$, $N_d^y$\} protected though encryption. Hence, the proposed EBAKE-SE provides perfect forward secrecy.  
\end{enumerate}
\section{Formal security analysis using ROR}
\label{formal}
\noindent In this section, we provide a formal security model for the session key ($SK_{xy}$) derived as an outcome of EBAKE-SE. A random oracle-based Real-Or-Random (ROR) model is used for the formal security modelling of the proposed EBAKE-SE. \textcolor{black}{Recently, many researcher in \cite{jhaveri2021fault} \cite{Das2019} adopted ROR model for their security validations.} ROR follows the principle of "indistinguishability" between a real session key and a random number. We first instigate the ROR security model and then provide the security proof for the proposed EBBAC-SE under the instigated model.
\subsection{Security Model}
We define a security model of the proposed EBAKE-SE using a game between a probabilistic polynomial time (PPT) challenger $\mathcal{C}$ and a responder $\mathcal{R}$. In this game, the challenger $\mathcal{C}$ loads oracle queries, and the responder $\mathcal{R}$ responds to these queries. Let us consider three participants (smart IoT device $D_x$, smart IoT device $D_y$, and trusted authority $TA$) in the proposed protocol $\mathcal{P}$.    

\textbf{Responder Model:} Let us define oracle instances for responders: $\mathcal{O}^{l}_{TA}$, $\mathcal{O}^{m}_{D_x}$, $\mathcal{O}^{n}_{D_y}$, are oracles of $l$, $m$ and $n$ for the $TA$ instances, device $D_x$ and the device $D_y$ respectively. These participants are called fresh if they do not reveal the original session key as a response to the $\mathcal{R}$ query by $\mathcal{C}$. These participants are called partners if they share a common session-id $S_{id}$ transcript of all the communicated messages. These participants are commonly considered as $\mathcal{D_l}$ if it is not necessary to represent them separately.  

\textbf{Challenger Model:} We design a challenger $\mathcal{C}$ using the famous \textit{Dolev-Yao model}. The challenger $\mathcal {} $ can perform active and passive attacks over the Dolev-Yao channel. Following random oracle, queries define capabilities for a PPT challenger $\mathcal{C}$.

\textit{Execute Query:} \textbf{$\mathcal E(\mathcal{O}^{l}_{TA}, \mathcal{O}^{m}_{D_x}, \mathcal{O}^{n}_{D_y})$} query provides all communicated messages over open channel between all participants. This query is a passive attack over the proposed protocol $\mathcal{P}$.

\textit{Reveal Query:} \textbf{$\mathcal R(\mathcal{O}^{m}_{D_x}):$)} query responds session key $SK$ to the challenger $\mathcal{C}$ if the responder $\mathcal{R}$ accepts it.

\textit{Hash Query:} \textbf{$\mathcal H(m_x)$} query responds random $r_x$ and stores it in a list $\mathcal{L_x}$ defined with a null value by the responder $\mathcal{R}$.

\textit{Send Query:} \textbf{$\mathcal {S}(\mathcal{O}^{m} _{D_x}, m_x)$} query is presented as an active intrusion over proposed protocol $\mathcal{P}$. The challenger $\mathcal{C}$ sends message $m_x$ to the responder $\mathcal{R}$ and gets the reply from $\mathcal{R}$ according to the specifications of the message $m_x$.

\textit{Test Query:} \textbf{$\mathcal T(\mathcal{O}^{m}_{D_x}):$)} query responds either true session key or a equal size random element. The responder $\mathcal{R}$ randomly selects a bit \textit{u}. If $\mathcal{R}$ randomly selects \textit{u} = 1 then it returns original session key else (means \textit{u} = 0) it returns random element with equal bit length of $SK$ to the challenger $\mathcal{C}$. 

\textit{Corrupt Query:} \textbf{$\mathcal CR(\mathcal{O}^{m}_{D_x}):$)} query responds data stored inside memory of the responder $\mathcal{R}$ to challenger $\mathcal{C}$. \textcolor{black}{Through this query challenge can get any data store inside non-secure memory of IoT devices}

The challenger tries all these queries for finite times, and after executing these queries, $\mathcal{C}$ guesses the value of bit \textit{u} as \textit{u'}. Let \textit{$Adv_{\mathcal{P}}$} represent winning event (retrieves original session key) for the challenger $\mathcal{C}$ and $\mathcal{SUC}$ represent the success position for $\mathcal{C}$. We can define the challenger $\mathcal{C}$'s advantage of breaking the proposed EBAKE-SE as:
\begin{equation}
    \label{eq3}
    \textit{$Adv_{\mathcal{P}}$}(\mathcal{C}) = 2*Pr[\mathcal{SUC}] - 1,
\end{equation}    
\hspace{4cm}    OR 
\begin{equation}    
    \label{eq4}
    \textit{$Adv_{\mathcal{P}}$}(\mathcal{C}) = 2*Pr[u'= u] - 1,
\end{equation}
Let $q_s$ represent the number of send queries, $l_h$ represent the hash length, $l_r$ represent the length of random elements, $q_h$ represent the number of the hash query, and $q_e$ represent the number of executing query, we can give formal security proof for the proposed EBAKE-SE as follow:  
\subsection{Formal Security Proof}
\begin{theorem}
We consider the cyclic group $\mathcal{G}$ of order \textit{n} to define an elliptic curve \textit{E} over finite field \textit{$F_p$}. We define the finite time $t_c$ during it the challenger $\mathcal{C}$ tries $q_h$, $q_e$ and $q_s$ to break the proposed protocol $\mathcal{P}$. We can define security for the proposed $\mathcal{P}$ against oracle queries loaded by the challenger $\mathcal{C}$ as,
\begin{equation}
\begin{split}
   \label{eq5}
    \textit{$Adv_{\mathcal{P}}$}(\mathcal{C}) \leq \frac{q^2_{h}}{2^{l_h+1}} + \frac{(q_{s}+q_{e})^2}{2^{l_s+1}} + (4*q_e+2*q*s) \\ \textit{$Adv_{\mathcal{C}^{ECDH}}$} (t^*) + max(q_s, (\frac{1}{2^l}, \rho_{fp})),
\end{split}
\end{equation}
For any given \textit{xP} and \textit{yP}, the \textit{$Adv_{\mathcal{C}^{ECDH}}$}($t^*$) represent the polynomial time ($t^*$) probability for the challenger $C$ to break the elliptic curve diffie-hellman problem and compute the valid \textit{xyP} value.  
\end{theorem}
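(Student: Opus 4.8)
The plan is to establish the bound via the standard sequence-of-games (game-hopping) technique, building a chain $G_0, G_1, \ldots, G_4$ in which $G_0$ is the genuine ROR attack and each successive game is shown to be statistically or computationally close to its predecessor. Writing $\mathcal{SUC}_i$ for the event that $\mathcal{C}$ correctly guesses the \emph{Test}-query bit $u$ in game $G_i$, I would bound each pairwise difference $|Pr[\mathcal{SUC}_i] - Pr[\mathcal{SUC}_{i+1}]|$ and then sum via the triangle inequality. By Eq.~\ref{eq4}, $Adv_{\mathcal{P}}(\mathcal{C}) = |2\,Pr[\mathcal{SUC}_0] - 1|$, so it suffices to show the final game hides $u$ up to the residual terms.

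First I would set $G_0$ as the real game and $G_1$ as a faithful simulation in which the responder $\mathcal{R}$ answers every \emph{Execute}, \emph{Send}, \emph{Reveal}, \emph{Hash}, \emph{Test}, and \emph{Corrupt} query by maintaining consistent lists (in particular the hash list $\mathcal{L}_x$); because the simulation is perfect, $Pr[\mathcal{SUC}_0] = Pr[\mathcal{SUC}_1]$. In $G_2$ I would abort on any collision among hash outputs, which by the birthday bound costs at most $\tfrac{q_h^2}{2^{l_h+1}}$, yielding the first term. In $G_3$ I would abort on any collision among the random nonces $\{N_d^x, N_d^y\}$, timestamps, and transcript components produced across the $q_s + q_e$ active and passive queries, costing at most $\tfrac{(q_s+q_e)^2}{2^{l_s+1}}$, the second term. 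Both transitions are routine applications of the difference lemma.

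The crux is $G_4$, where distinguishing the real session key from random is reduced to the hardness of ECDH. Here I would embed a given instance $(xP, yP)$ into the ephemeral public values $R_x, R_y$ of a guessed target session, so that the shared material $B_{xy} = r_x r_y P = xyP$ that feeds $SK_{xy}$ becomes precisely the ECDH solution; if $\mathcal{C}$ ever queries the random oracle on the correct $SK_{xy}$ preimage, the simulator reads off $xyP$ and breaks ECDH. The hard part will be keeping the simulation consistent for every \emph{other} oracle call while the challenge point is planted — the simulator must guess which session among the $q_e$ \emph{Execute}-sessions and $q_s$ \emph{Send}-sessions the adversary attacks, and must account for the two independent ephemeral points injected per session; this guessing loss is exactly what produces the multiplicative factor $(4q_e + 2q_s)$, giving the term $(4q_e + 2q_s)\,Adv_{\mathcal{C}^{ECDH}}(t^*)$.

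Finally, once $B_{xy}$ is replaced by an independent random element, the bit $u$ is information-theoretically hidden except along three residual avenues: an online guessing attack against the short shared secret $K_{dta}$ (bounded by the $q_s$ \emph{Send} attempts), a direct guess of the $l$-bit session key ($\tfrac{1}{2^l}$), and the false-acceptance probability $\rho_{fp}$ of the verification checks $P_d^x, P_d^y, P_d^{TA}, P_d^{xx}$. Taking the maximum of these gives the last term $\max\!\big(q_s, \tfrac{1}{2^l}, \rho_{fp}\big)$, and in this game $Pr[\mathcal{SUC}_4] = \tfrac{1}{2}$ up to that residual. Assembling the four transitions then yields the inequality of Eq.~\ref{eq5}.
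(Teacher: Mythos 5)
Your games $G_0$--$G_3$ are fine and parallel the paper's own hops ($Gm_0$: the real game; $Gm_1$: passive \emph{Execute} eavesdropping changes nothing because $SK_{xy}$ depends on secret nonces, identities and $K_{dta}$; $Gm_2$: birthday bounds $\frac{q_h^2}{2^{l_h+1}}$ for hash collisions and $\frac{(q_s+q_e)^2}{2^{l_s+1}}$ for collisions among nonces/timestamps). The problem is your game $G_4$, which is where the whole theorem lives: you propose to embed the ECDH instance $(xP, yP)$ into ``ephemeral public values $R_x, R_y$ of a guessed target session, so that the shared material $B_{xy} = r_x r_y P = xyP$ feeds $SK_{xy}$.'' That is the structure of \emph{Das et al.'s} scheme (Section \ref{review}), the protocol this paper cryptanalyzes and replaces --- not the structure of EBAKE-SE. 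In EBAKE-SE no ephemeral curve points $R_x, R_y$ are exchanged and no Diffie--Hellman shared point enters the key: the session key is $SK_{xy} = hash\langle ID_x, N_d^x, T_1, ID_y, N_d^y, T_2, K_{dta}\rangle$, a hash of identities, nonces, timestamps and the long-term shared secret. Your simulator would have nowhere to plant the challenge $(xP,yP)$, and even a successful adversary would never query the random oracle on anything containing $xyP$, so the reduction as described cannot be set up, let alone extract an ECDH solution.

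The role ECDH actually plays in the paper's proof (game $Gm_3$) is different: the nonces $N_d^x, N_d^y$ that make $SK_{xy}$ unpredictable travel only inside the ciphertexts $Z^x$ and $Z^y$, which are asymmetrically encrypted under the \emph{long-term} public keys $Q_d^x = r_d^x * P$ and $Q_d^y = r_d^y * P$; the \emph{Corrupt} queries deliberately leak the stored device data, and the claim is that recovering the nonces from $Z^x, Z^y$ given $\{Q_d^x, Q_d^y, P\}$ requires breaking ECDH/ECDLP, which costs $(4*q_e+2*q_s)\,Adv_{\mathcal{C}^{ECDH}}(t^*)$, with the residual $max(q_s, (\frac{1}{2^l}, \rho_{fp}))$ covering a lucky direct guess of the nonces. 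So a correct write-up of your final hop must reduce the \emph{confidentiality of the ECC encryption of the nonces} to ECDH, not the key-derivation step itself. Your reading of the residual $\max$ term (online guessing / direct key guess / false acceptance) is a reasonable gloss, and your overall game-hopping skeleton is sound, but as written the decisive reduction targets the wrong protocol and would fail.
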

\begin{proof}
We define four identical security games \{$Gm_0$, $Gm_1$, $Gm_2$, $Gm_3$\} which proves the proposed protocol $\mathcal{P}$ is secured against P.P.T. challenger $\mathcal{C}$ under ROR model and \textit{$Adv_{\mathcal{P}}$}($\mathcal{C}$) is negligible under random oracle game. Let $Suc_i$ define the probability of correctly guessing the value of bit \textit{u} by the challenger $\mathcal{C}$ for the game $Gm_i$ during the challenge session. 

\vspace{0.1in}
\textit{Game $Gm_0$:} The game $Gm_0$ is an identical game to real protocol. If the challenger $\mathcal{C}$ takes more time than a threshold $t^*$ or does not respond to the game, then the arbitrary value for the bit \textit{u} will be selected. Thus, it is apparent that,
\begin{equation}
    \label{eq6}
    \textit{$Adv_{\mathcal{P}}$}(\mathcal{C}) = 2*Pr[Suc_0] - 1,
\end{equation}

\vspace{0.1in}
\textit{Game $Gm_1$:}
In this game, the challenger $\mathcal{C}$ performs executive query \textbf{$\mathcal{E}$} to eavesdrop the communication between devices ($D_x$ and $D_y$) and the trusted authority (TA). 
\begin{itemize}
    \item \textbf{$\mathcal E(D_x, TA):$} is loaded for capturing the communication between the device $D_x$ and TA. 
    \item \textcolor{black}{\textbf{$\mathcal E(TA, D_y):$} is loaded for receiving the communication between the device $D_y$ and TA.}
\end{itemize}
The challenger $\mathcal{C}$ stores all the messages extracted from the above queries and try to compute the session key $SK_{xy}$. If the challenger $\mathcal{C}$ could compute the session key then challenger $\mathcal{C}$ captures the game $Gm_1$, otherwise it is considered that Pr[$Suc_1$] = Pr[$Suc_0$]. In the proposed scheme, we compute the final session key $SK_{xy}$ : \textit{hash \big \langle $ ID_x, N_d^x, T_1, ID_y, N_d^y, T_2, K_{dta}) $ \big \rangle} using the random nonces and the nonpublic identities with a shared secret, Hence,  
\begin{equation}
    \label{eq7}
     Pr[Suc_1] = Pr[Suc_0],
\end{equation}
In the proposed scheme, we compute the final session key $SK_{xy}$ : \textit{hash \big \langle $ ID_x, N_d^x, T_1, ID_y, N_d^y, T_2, K_{dta} $ \big \rangle} using the random nonces and the nonpublic identities with shared secret, hence it is infeasible for the challenger $\mathcal{C}$ to compute the session key using captured information that is identical to the game $Gm_0$. Therefore, the equation \ref{eq7} holds true. 

\textit{Game : $Gm_2$}
In this game, the challenger $\mathcal{C}$ performs \textbf{$\mathcal{H}$} and \textbf{$\mathcal{S}$} query to communicate with the devices ($D_x$ and $D_y$) and the trusted authority (TA). In this game, the challenger $\mathcal{C}$ tries to create a collision for the establishment of a fake trust. We can define collision probability of hash function using the birthday paradox at most $\frac{q^2_{h}}{2^{l_h+1}}$. Each communicated message in the proposed protocol $\mathcal{P}$ built up using the random nonces ($N_d^y$, $N_d^x$), random numbers ($r_d^x$,$r_d^xy$) and timestamps ($T_i$). The collision probability for these values is at most $\frac{(q_{s}+q_{e})^2}{2^{l_s+1}}$. Thus the game $Gm_2$ and the game $Gm_1$ are identical games till the collision arises, hence,
\begin{equation}
    \label{eq8}
    Pr[Suc_2] - Pr[Suc_1] \leq \frac{q^2_{h}}{2^{l_h+1}} + \frac{(q_{s}+q_{e})^2}{2^{l_s+1}},
\end{equation}
\textit{Game : $Gm_3$}
In this game, the challenger $\mathcal{C}$ performs corrupt query \textbf{$\mathcal CR(\mathcal{O}^{m}_{D_x}):$)} and send query \textbf{$\mathcal{S}$} or an execute query \textbf{$\mathcal{E}$} with the random oracles. The challenger also tries to solve the ECDH problem of the ECC. Let us consider that the challenger $C$ tries following queries,
\begin{itemize}
    \item Using \textbf{$\mathcal CR(\mathcal{O}^{m}_{D_x})$):} query, the challenger retrieves \textit{\big \langle $ID_d^x$, $r_d^x$, $K_{dta}$, $DP_1^x$ \big \rangle} 
    \item \textcolor{black}{Using \textbf{$\mathcal CR(\mathcal{O}^{m}_{D_y})$):} query, the challenger retrieves \textit{\big \langle $ID_d^x$, $DP_1^x$, $K_{dta}$, $Q_d^x$ \big \rangle}}
    \item Using \item \textbf{$\mathcal E(D_x, TA)$:} query, the challenger retrieves \big \langle $ID_T^x$, $P_d^x$, $T_1$ \big  \rangle, \big \langle $Z^y$,$T_4$ \big \rangle.
    \item \textcolor{black}{Using \textbf{$\mathcal E(TA, D_y)$:} query, the challenger retrieves \big \langle $Z^x$, $P_d^y$, $T_2$ \big \rangle, \big \langle $Z^y$, $P_d^{TA}$, $T_3$ \big \rangle.} 
\end{itemize}
After performing the following queries for a finite time, the challenger tries to decrypt the data encrypted by the public keys \{$Q_d^x$, $Q_d^y$\}. These public keys are computed as $Q_d^x$ = $r_d^x$*P and $Q_d^y$ = $r_d^y$*P. For given \{$Q_d^x$, $Q_d^y$\} and P, it is computationally infeasible to find the value of \{$r_d^x$, $r_d^y$\}. The probability of solving the ECDH problem is at most (4*$q_e$+2*q*s)\textit{$Adv_{\mathcal{C}^{ECDH}}$}($t^*$). The probability of guessing the correct random nonces ($N_d^y$, $N_d^x$) after performing the \textbf{$\mathcal CR(\mathcal{O}^{m}_{D_x})$)} and \textbf{$\mathcal CR(\mathcal{O}^{m}_{D_y})$)} is at most max($q_s$, ($\frac{1}{2^l}$, $\rho_{fp}$)). It is infeasible for the challenger $\mathcal {} $ to solve the ECDH problem and guess the correct random numbers simultaneously in polynomial time. Hence, the game $Gm_3$ is identical to the game $Gm_2$. Thus we have,  
\begin{equation}
\begin{split}
    \label{eq9}
    Pr[Suc_3] - Pr[Suc_2] \leq (4*q_e+2*q*s)\textit{$Adv_{\mathcal{C}^{ECDH}}$}(t^*) + \\ max(q_s, (\frac{1}{2^l}, \rho_{fp})),
\end{split}
\end{equation}
\end{proof}
Now the challenger $\mathcal{C}$ tries to guess the bit \textit{u'} and the probability of correct guess is at most $\frac{1}{2}$. Thus, from equations \ref{eq8} and \ref{eq9}, we can derive,
\begin{equation}
\begin{split}
 \label{eq10}
\textcolor{black}{\textit{$Adv_{\mathcal{P}}$}(\mathcal{C}) \leq  \frac{q^2_{h}}{2^{l_h+1}}} + \frac{(q_{s}+q_{e})^2}{2^{l_s+1}} + (4*q_e+2*q*s) \\ \textit{$Adv_{\mathcal{C}^{ECDH}}$} (t^*) + max(q_s, (\frac{1}{2^l}, \rho_{fp})),
\end{split}
\end{equation}
\section{Implementation using MQTT}
\label{implelment}
\noindent The Message queuing telemetry transport (MQTT) protocol is a widely adopted publish-subscribe-based, lightweight application layer protocol for communicating in the IoT-based environment. In the MQTT protocol, there are three entities, (1) The publisher (who publishes the data), (2) The subscriber (who receives the data), and the broker (who integrates and forwards the data). To implement the proposed protocol, we used Raspberry Pi 3 Model B (with Quad Core 1.2GHz Broadcom BCM2837 64bit CPU and 1GB RAM) as a sensing device and the laptop device installed with the mosquitto broker on it. We can also utilize global brokers (such as AWS and hivemq). \textcolor{black}{For sniffing purposes, we utilized laptop devices and installed the mosquitto broker and Wireshark tool over it.} We used \textit{Paho} library that provides MQTT client services. We implemented the proposed EBAKE-SE with the fifteen sensing devices (Raspberry Pis) established session keys with each other. Fig. \ref{sessionkey} shows the final computed session key between the IoT device $D_x$ and the IoT device $D_y$. 
\begin{figure}[H]
\includegraphics[width=\linewidth, height=4cm]{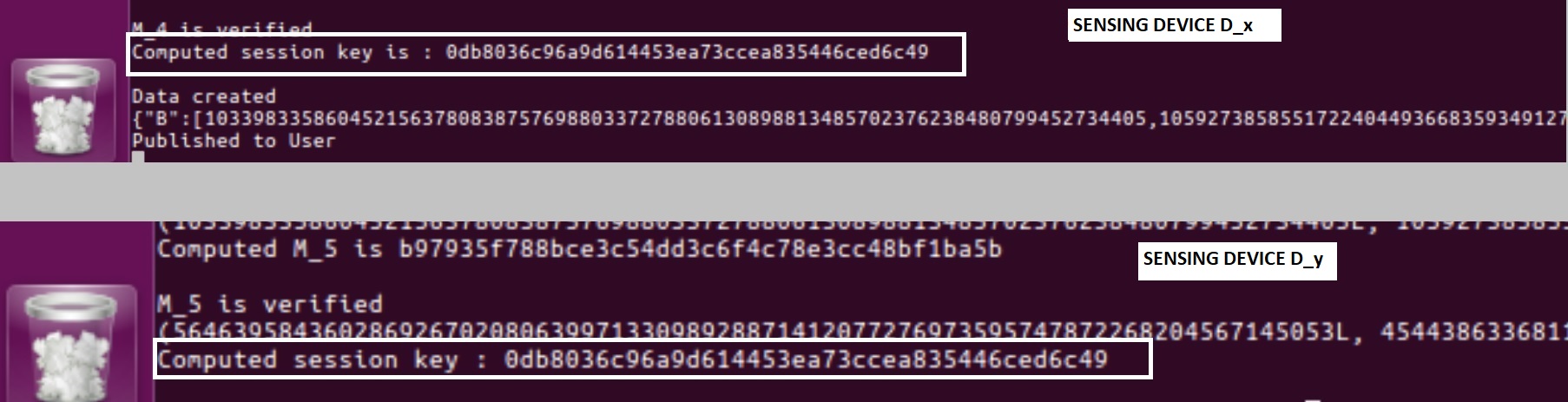}
\caption{Session key computation} \label{sessionkey}
\end{figure}

\noindent  The MQTT protocols works with three quality of services, \textit{QoS 0} (at most once), \textit{QoS 1} (at least once) and \textit{QoS 2} (exactly once) for packet transmissions. As mentioned earlier, we collected average throughput, packet delivery ratio, and round-trip delay for the setup by analysing the data collected using a wireshark tool. We define the average throughput as an average number of packets transmitted and successfully received in a unit. We observed that the average throughput for the proposed setup was 643 packets per minute. The average packet delivery ratio was around 99.34\%, with 0.66 packet loss. The average packet delivery ratio might reduce if we use a global broker. The range of round-trip delay (from $D_x$ to TA, TA to $D_y$, $D_y$ to TA, and $TA$ to $D_x$) was around 45 ms - 70 ms because of the light-weight computation of the proposed EBAKE-SE protocol.       
\section{Comparative analysis}
\label{comparative}
\noindent In this section, we highlight the computational efficiency of the proposed scheme using a comparative analysis of it based on the number of cryptographic operations, computation time (in ms), and communication cost (in bits). We compare the proposed EBAKE-SE with the other recently proposed schemes for a similar environment. 
\subsection{Cryptographic Operations}
\noindent The Table \ref{Table:2} highlights a comparative analysis of the proposed scheme (only authentication phase) with other existing schemes based on the number of cryptographic operations required. 
\begin{table}[H]
    \centering
    \begin{threeparttable}
    \caption{Network Model and Cryptographic Operations}
    \begin{tabular}{|p{0.8cm}|p{1.2cm}|p{0.4cm}|p{0.4cm}|p{0.4cm}|p{0.4cm}|p{0.4cm}|p{0.4cm}|} \hline
    \small{\textbf{Scheme}} & \small{\textbf{Model}} & \small{\textbf{$OP_1$}} & \small{\textbf{$Op_2$}} & \small{\textbf{$Op_3$}} & \small{\textbf{$Op_4$}} & \small{\textbf{$Op_5$}} & \small{\textbf{$Op_6$}} \\ \hline
    Ours & D-TA-D & 2 & 4 & 11 & 2 & - & - \\
    \cite{Das2019} & D-D & - & - & 12 & - & 12 & - \\ 
    %\cite{challa2017secure} & D-G-D & - & - & 12 & - & 14 & - \\
    \cite{li2017robust} & U-G-D & - & - & 19 & 9 & 6 & - \\
    \cite{wu2018secure} & MD-MD-S & 3 & - & 9 & 2 & 13 & -  \\
    %\cite{porambage2015group} & D-D & - & - & 32 & - & 23 & 7 \\ 
    \cite{li2019} & U-G-D & - & 1 & 22 & 11 & 6 & - \\ \hline
    \end{tabular}
    \begin{tablenotes}
      \small
      \item \textcolor{black}{\textit{$OP^1$}: Symmetric Encryption/Decryption, \textit{$OP^2$}: Asymmetric Encryption/Decryption, \textit{$OP^3$}: Hash function,} \textit{$OP^4$}: XOR operation, \textit{$OP^5$}: ECC point multiplication operation, \textit{$OP^6$}: ECC point summation operations, \textit{U}: User, \textit{GW}:Gateway,  \textit{TA}:Trusted Authority, \textit{D}: Sensing device, \textit{S}: Server, 
      \textit{MD}: Mobile device.
    \end{tablenotes}
    \label{Table:2}
    \end{threeparttable}
\end{table}
%\cite{wazid2020lam} & U-G-D & - & - & 48 & 19 & - & - \\ 
%\cite{Wazid2018drone} & U-S-Dr & - & - & 41 & 20 & - & - \\
%\cite{Jangirala2018} & U-G-D & - & - & 28 & 20 & 2 & - \\
%\cite{wazid2017generic} & U-G-D & 8 & - & 26 & 4 & 0 & 0 \\
%\cite{Wazid2017home} & U-G-D & 4 & - & 34 & 8 & - & - \\ 
\subsection{Computation Time}
\noindent Table \ref{Table:3} highlights a comparative analysis of EBAAC-SE with other existing schemes based on the computation time required by the scheme. In the initial phase of our implementation, we collected results for basic cryptographic operations. These results are collected for the environment discussed in section \ref{implelment}. Observations of these computations were as follows: The time required for the single hash function using SHA was ($T_h$) 0.043 ms. The time required by a single elliptic curve point addition operation was ($T_{pa}$) 0.068 ms. The time required by a single elliptic curve point multiplication operation was ($T_{pm}$) 12.226 ms. The time required for single symmetric encryption over AES was ($T_{sym}$) 0.046ms. The time required by single ECC encryption is ($T_{asym}$ $\approx$ $T_{pm}$) 12.268 ms. Based on these observations, in Table \ref{Table:3}, we highlight a computation time-based comparison between the proposed scheme and other existing schemes. 
\begin{table*}
    \centering
    \caption{Computation time}
    %\begin{tabular}{|cccccc|} \hline
    \begin{tabular}{p{1cm}p{3cm}p{2cm}p{3cm}p{4cm}p{2cm}} \hline
    %Total 11cm
    \textbf{Scheme} & \textbf{Device-1} & \textbf{Gateway/ cloud} & \textbf{Device-2} & \textbf{Total operations} & \textbf{Total time}\\ \hline
    Proposed & $T_{sym}$ + $2T_{asym}$ + $3T_h$ & $T_{sym}$ + $5T_h$ & $2T_{asym}$ + $3T_h$ & 2$T_{sym}$ + $4T_{asym}$ + $11T_h$ & 49.469 ms \\
    \cite{Das2019} & 6$T_{pm}$ + 6$T_{h}$ + 2$T_{pa}$ & - & 6$T_{pm}$ + 6$T_{h}$ + 2$T_{pa}$ & 12$T_{pm}$ + 12$T_{h}$ + 4$T_{pa}$ &  147.5 ms\\ 
    %\cite{challa2017secure} & 5$T_{pm}$ + 5$T_h$ &
    \cite{challa2017secure} & 5$T_{pm}$ + 4$T_h$ & 4$T_{pm}$ + 3$T_h$ & 14$T_{pm}$ + 12$T_h$ & 23$T_{pm}$ + 19$T_h$ & 171.68 ms \\
    \cite{li2017robust} & 3$T_{pm}$ + 8$T_h$ & $T_{pm}$ + 7$T_h$ & 2$T_{pm}$ + 4$T_h$ & 6$T_{pm}$ + 19$T_h$ & 74.173 ms\\
    \cite{wu2018secure} & 6$T_{pm}$ + 4$T_h$ + 2$T_{sym}$ & 4$T_{pm}$ + 5$T_h$ & 3$T_{pm}$ + $T_{sym}$ & 13$T_{pm}$ + 9$T_h$ + 3$T_{sym}$ & 159.454 ms \\ 
    %\cite{porambage2015group} & 7$T_{pm}$ + 15$T_h$ & - & 16$T_{pm}$ + 17$T_h$ & 23$T_{pm}$ + 32$T_h$ & 282.574 ms \\
    \cite{li2019} & 3$T_{pm}$ + 10$T_h$ + $T_{asym}$ & $T_{pm}$ + 8$T_h$ & 2$T_{pm}$ + 8$T_h$ & 6$T_{pm}$ + 22$T_h$ + $T_{asym}$ & 86.528 ms \\ \hline   
    \end{tabular}
    \label{Table:3}
\end{table*}
%\subsection{Communication Cost} 
%\subsection{Energy Consumption}
\section{Conclusions and Future work}
\label{conclusion}
\noindent In this paper, we proposed an ECC-based authenticated key exchange scheme between two Industrial IoT devices via trusted authority. We use a tamper-proof microprocessor called a secret element (SE) to store the secret parameters of sensing devices. We provided cryptanalysis for the RUA scheme proposed by Das et al. for a similar environment and highlighted numerous vulnerabilities such as MITM attack and impersonation attack. Afterwards, we offered an RUA using ECC between two advanced-IoT devices via cloud trusted authority. We presented informal security analysis as well as formal analysis for EBAKE-SE. We compared the presented EBAKE-SE with existing schemes based on security features, computation time, and several cryptography operations. Furthermore, we presented an implementation environment using the publish-subscribe-based MQTT protocol. \textcolor{black}{The numerous IoT-based industries (such as smart homes, smart healthcare, smart transport, smart security, and surveillance system) can use the proposed EBAKE-SE to enhance their security mechanism with acceptable reliability and efficiency.}
\section*{Acknowledgements}
\noindent This work was supported by the Researchers Supporting Project (No. RSP-2021/395), King Saud University, Riyadh, Saudi Arabia.).
\bibliographystyle{elsarticle-num}
\bibliography{ref}
\end{document}